\documentclass[journal]{IEEEtran}
\usepackage[utf8]{inputenc}
\usepackage{amsmath,amssymb,amsfonts}
\usepackage{graphicx, color,soul}
\usepackage{booktabs,float}
\usepackage{cite}
\usepackage{algorithm}
\usepackage{algpseudocode}
\usepackage{hyperref}

\newtheorem{remark}{Remark}
\newtheorem{theorem}{Theorem}

\newtheorem{proof}{Proof}
\newtheorem{assumption}{{\bf{Assumption}}}
\newtheorem{problem}{{\bf{Problem}}}

\newcommand\xqed[1]{%
  \leavevmode\unskip\penalty9999 \hbox{}\nobreak\hfill
  \quad\hbox{#1}}
\newcommand\demo{\xqed{$\triangle$}}

\title{Data-Driven Switched State Estimation with Sparse Sensor Scheduling for Nonlinear Networked Systems}
\author{Gianfranco Gagliardi, Franco Angelo Torchiaro, Vincenzo Gallelli, Ayman El Qemmah and Alessandro Casavola
\thanks{G. Gagliardi, F.A. Torchiaro, A. El Qemmah and A. Casavola are with the Department of Computer, Modeling, Electronics and Systems Engineering (DIMES), University of Calabria. \\ 
V. Gallelli is with the Department of Civil Engineering (DINCI), University of Calabria. }}

\begin{document}

\maketitle

\begin{abstract}
This paper presents a data-driven framework for joint observer design and sparse sensor scheduling for unknown nonlinear networked systems. The nonlinear dynamics are approximated online as a piecewise sequence of locally linearized discrete-time systems, resulting in a switched linear representation recursively identified through Subspace State-Space System Identification (4SID). To ensure consistency across regime transitions, an Orthogonal Procrustes alignment is introduced to promote coordinate consistency across consecutive regime transitions and mitigate artificial discontinuities caused by arbitrary state-space coordinate changes. Based on the identified local realizations, a dual-rate predictor–corrector observer is designed. The observer gain is computed through a convex optimization problem that jointly addresses estimation accuracy, sensor sparsity, and stability requirements. In particular, an $L_{2,1}$-norm regularization term promotes column sparsity in the gain matrix, enabling the automatic selection of informative measurement channels, while a spectral-norm constraint guarantees Schur stability of the estimation error dynamics. The proposed framework is first validated on a traffic network simulated in Aimsun Next. Results obtained on an $18$-link network show that the observer accurately reconstructs macroscopic traffic states while significantly reducing the number of active sensors required for real-time estimation.
\end{abstract}

\begin{IEEEkeywords}
Data-driven estimation, dynamic sensor selection, subspace identification, convex optimization, cyber-physical systems, macroscopic traffic tracking.
\end{IEEEkeywords}
\section{Introduction}
\label{s1}
Optimal sensor placement in large-scale dynamical systems is essential for reducing the number of active sensing units
while maintaining accurate monitoring and state estimation \cite{Joshi, Polyak, Roy}. Addressing this challenge, this paper presents a framework for concurrent dynamic sensor selection and observer design to jointly improve estimation accuracy and reduce sensing requirements in large-scale nonlinear systems.
%
Extensive research has addressed the problem of optimal sensor selection and placement in dynamical networks. Existing model-based approaches can generally be classified into three main categories. The first includes combinatorial and graph-based methods that exploit the structural properties of the underlying network \cite{Tzoumas1}, \cite{Tzoumas2}, \cite{Gagliardi}. A second class relies on mixed-integer optimization and convex relaxation techniques to determine sparse or minimum-cardinality sensor configurations \cite{Chanekar}, \cite{Taha}. A third approach formulates the problem within a semidefinite programming (SDP) framework, often incorporating sparsity-promoting regularization terms to jointly optimize estimation performance and sensor usage \cite{Dhingra}, \cite{Lin}, \cite{GagliardiLetter}.
Among these regularization strategies, the $L_{2,1}$-norm (Group Lasso \cite{yuan2006model}) is particularly effective for structured sensor selection. Unlike the standard $L_1$ norm, which promotes element-wise sparsity, the $L_{2,1}$ norm enforces sparsity at the group level by penalizing the Euclidean norm of entire variable blocks. In observer design, this property naturally promotes column sparsity in the observer gain matrix, allowing uninformative measurement channels to be completely discarded \cite{joshi2009sensor}.
Despite their effectiveness, these approaches generally rely on the availability of an accurate analytical state-space model of the underlying physical system. In modern large-scale cyber-physical systems (e.g. transportation networks, power systems, and large-scale biochemical processes), however, deriving such models from first principles is often impractical.


As a result, data-driven approaches have attracted increasing attention in recent years. Rather than relying on analytical models, these methods exploit operational input-output data to identify and adapt models of the system online. Within this context, Deep Neural Networks (DNNs), Recurrent Neural Networks (RNNs), and, more recently, Physics-Informed Neural Networks (PINNs) have been widely adopted to approximate nonlinear dynamics, estimate unmeasurable states, and construct black-box or hybrid observers \cite{Lusch2018, Raissi2019, Revach2022}. Despite their strong approximation capabilities, neural-network-based approaches often require large training datasets, provide limited interpretability, and generally lack explicit stability guarantees for the resulting estimation dynamics.

To bridge the gap between model-free adaptation and rigorous stability guarantees, recent advances in control theory have increasingly leveraged the behavioral framework based on Willems’ Fundamental Lemma \cite{willems2005note}. Building on this perspective, data-driven methodologies have been extended to the synthesis of stabilizing observers directly from input-output Hankel matrices, without requiring explicit model identification \cite{Wolff2024}. In parallel, Subspace State-Space System Identification (4SID) methods \cite{vanoverschee1996subspace, katayama2006subspace} provide an effective algebraic framework for extracting local linear time-invariant (LTI) realizations directly from measured data.
However, despite these advances, most existing data-driven estimation frameworks assume a static and fully available sensor configuration. 
Motivated by the need to combine adaptive state estimation with dynamic sensor selection, this paper proposes a fully data-driven framework for nonlinear networked systems. The underlying nonlinear process is approximated online through a switched linear representation, where local models are recursively identified from streaming input-output data. Based on these local realizations, an adaptive sparse observer is synthesized to jointly guarantee estimation accuracy, stability, and sensor reduction.

The main contributions of this work are summarized as follows:
\begin{enumerate}
\item Data-Driven Switched Approximation: Nonlinear dynamics are modeled online as switched linear systems via a recursive Subspace State-Space System Identification (4SID) procedure, avoiding explicit first-principles modeling and adapting to regime shifts.
\item Coherent Subspace Alignment: An Orthogonal Procrustes alignment preserves coordinate consistency across regime transitions, preventing artificial state trajectory discontinuities.
\item Joint Observer Synthesis and Sensor Sparsification: A predictor-corrector observer is designed using the identified local realizations. The observer gains are computed via convex optimization with an $L_{2,1}$-norm penalty promoting column sparsity in the gain matrix and automatically discarding uninformative sensor channels.
\item Stability-Constrained Estimation: A spectral norm constraint ensures Schur stability of local error dynamics, guaranteeing nominal asymptotic stability and uniform ultimate boundedness under disturbances.  
\end{enumerate}

The proposed framework is validated on a traffic network simulated in Aimsun Next, where the observer accurately reconstructs the traffic states while significantly reducing the number of active sensors required for real-time estimation.


The paper is organized as follows: Section \ref{sec:preliminaries} provides mathematical preliminaries, and Section \ref{sec:sys_prob} formalizes the estimation and scheduling problem. Sections \ref{sec:proposed_solution} and \ref{sec:opt_formulation} detail the data-driven architecture, including the coherent subspace identification, Procrustes alignment, and convex optimization framework. Section \ref{sec:properties} establishes theoretical stability guarantees. Finally, Section \ref{sec:simulation} presents the Aimsun Next simulation results, followed by conclusions in Section \ref{sec:conclusion}.


\section{Preliminaries}
\label{sec:preliminaries}
Before detailing the proposed methodology, in this section the notation used throughout the paper is introduced. The sets of real numbers, non-negative integers, and positive integers are denoted by $\mathbb{R}$, $\mathbb{Z}_{\ge 0}$, and $\mathbb{Z}_{> 0}$, respectively. For a matrix $M \in \mathbb{R}^{n \times m}$, $M^T$ denotes its transpose and $M^\dagger$ denotes its Moore-Penrose pseudo-inverse. The induced 2-norm (spectral norm) is denoted by $\|M\|_2$, the Frobenius norm by $\|M\|_F$, and the mixed $L_{2,1}$-norm (Group Lasso) by $\|M\|_{2,1} = \sum_{i=1}^m \|M^{(i)}\|_2$, where $M^{(i)}$ denotes the $i$-th column of $M$. Furthermore, the mathematical expectation operator is denoted by $\mathbb{E}[\cdot]$. 

Consider two consecutive identified local realizations $\Sigma_j = (\hat A_j,\hat B_j,\hat C_j)$
and
$\Sigma_{j+1} = (\hat A_{j+1},\hat B_{j+1},\hat C_{j+1})$
associated with adjacent operational intervals $I_j $and $I_{j+1}$.
The transition between $\Sigma_j$ and $\Sigma_{j+1}$ is said to be coordinate-coherent if there exists a transformation
\begin{equation}\label{e1}
T_j \in O(n):= \{ T \in \mathbb{R}^{n \times n} \mid T^T T = I \}
\end{equation}
such that the updated realization satisfies
$\hat A_{j+1} \approx T_j^\top \hat A_j T_j,$ $\hat B_{j+1} \approx T_j^\top \hat B_j$ and
$\hat C_{j+1} \approx \hat C_j T_j$, while preserving the Euclidean norm
$\|T_j x\|_2 = \|x\|_2, \forall x \in \mathbb R^n$.
Consequently, consecutive state-space realizations evolve through norm-preserving coordinate transformations, preventing artificial amplification of the observer error during regime transitions.

In data-driven control and system identification, a fundamental tool is the block Hankel matrix. Given a discrete-time vector sequence $s = \{s_1, s_2, \dots, s_T\}$ with $s_k \in \mathbb{R}^d$, a user-defined past horizon $L$, and a window length $N$, the block Hankel matrix of depth $L$ is constructed as
\begin{equation}
	H_{L,N}(s) = 
	\begin{bmatrix}
		s_1 & s_2 & \dots & s_N \\
		s_2 & s_3 & \dots & s_{N+1} \\
		\vdots & \vdots & \ddots & \vdots \\
		s_L & s_{L+1} & \dots & s_{L+N-1}
	\end{bmatrix} \in \mathbb{R}^{d L \times N}.
	\label{eq:generic_hankel}
\end{equation}

\section{System Description and Problem Formulation}
\label{sec:sys_prob}

\subsection{Nonlinear System Description}
Consider a dynamic physical process governed by an unknown nonlinear discrete-time system:
\begin{subequations}
	\label{eq:nlsys}
	\begin{align}
		x_{k+1} &= f(x_k, u_k) + w_k, \label{nlstate}\\
		y_k &= h(x_k) + v_k, \label{nloutput}
	\end{align}
\end{subequations}
where:
\begin{itemize}
	\item $x_k \in \mathbb{R}^n$ is the unmeasured physical state of the system with initial condition $x_0 \in \mathbb{R}^n$;
	\item $u_k \in \mathbb{R}^{n_u}$ is the known input vector;
	\item $y_k \in \mathbb{R}^{n_y}$ is the measured output vector collected from the complete set of available sensors;
	\item $f(\cdot)$ and $h(\cdot)$ represent the unknown nonlinear state transition and observation functions;
	\item $w_k$ and $v_k$ denote process and measurement disturbance sequences. For the theoretical analysis, disturbances are assumed to be deterministically bounded. In the numerical simulations, Gaussian disturbances are used as a standard stochastic
benchmark; the corresponding boundedness statements should therefore be interpreted in a high-probability sense.
\end{itemize}

\subsection{Problem Statement}

This paper focuses on the design of an adaptive architecture that concurrently provides:
\begin{enumerate}
	\item[$i$)] \textit{Dynamic Sensor Scheduling:} A policy $\gamma_k$ that minimizes the number of active sensors (i.e. minimizes $\|\gamma_k\|_0$) while retaining sufficient information for accurate estimation.
	\item[$ii$)] \textit{State Estimation:} An estimator generating an approximation $\hat{x}_k$ of the true unmeasured state $x_k$ using solely the known inputs $u_k$ and the sparse subset of active measurements defined by $\gamma_k$.
	\item[$iii$)] \textit{Stability Guarantees:} A theoretical guarantee that the resulting estimation error $e_k = x_k - \hat{x}_k$ remains uniformly bounded and converges asymptotically despite structural shifts in the nonlinear dynamics or the sensor schedule.
\end{enumerate}

Specifically, the objective is to address this problem in the case where the explicit analytical forms of the nonlinear system functions $f(\cdot)$ and $h(\cdot)$ are \textit{unknown}, but offline and online input-output data samples from the nonlinear system are available. More in detail, the following assumption is made 

\begin{assumption}
	\label{ass:unknowndynpersistentdata}
	Suppose that the explicit analytical forms of the nonlinear system functions in \eqref{nlstate}--\eqref{nloutput} are \textit{unknown}, but input-output data $\{u_k, y_k\}$ are available, where the input sequence $u_k$ is persistently exciting, i.e. sufficiently rich to excite all relevant dynamic modes of the system \cite{vanoverschee1996subspace}. Furthermore, the system state dimension $n$ is known or can be estimated offline from historical data.
\end{assumption}

Then, the problem to be solved in the data-driven setting can be stated as follows:

\begin{problem}
	\label{problemddsess}
	\emph{Under Assumption \ref{ass:unknowndynpersistentdata}, design a data-driven state estimation and dynamic sensor selection framework. Specifically, the estimator must concurrently reconstruct the unmeasured
state trajectories \(x_k\) and select a sparse and informative subset of active sensors.}
\end{problem}

\section{Data-Driven Switched Architecture}
\label{sec:proposed_solution}

To solve Problem \ref{problemddsess} without resorting to analytical linearization, 
we propose a framework that sequentially approximates the nonlinear system using localized data-driven models. 

\subsection{Switched Linear Approximation}

Since the analytical forms of $f(\cdot)$ and $h(\cdot)$ are unknown and cannot be readily derived from first principles, a single global model is inadequate to describe the system behavior. Under the assumption that the underlying process evolves over a finite set of distinct operating regimes, the overall dynamics can be represented as a discrete-time switched system \cite{liberzon2003switching}.

To formalize this framework, let $\mathcal{K} = \{k_0, k_1, k_2, \dots \}$ denote a monotonically increasing sequence of switching instants, where each integer $k_j$ represents the specific discrete time step at which the $j$-th regime transition occurs, with the initial time $k_0 = 0$. This sequence partitions the semi-infinite discrete temporal horizon $\mathbb{Z}_{\ge 0}$ into a set of contiguous, non-overlapping operational intervals
\begin{equation}
	\mathcal{I}_j = [k_j, k_{j+1}), \quad \text{such that} \quad \bigcup_{j=0}^{\infty} \mathcal{I}_j = \mathbb{Z}_{\ge 0}.
\end{equation}
In such a framework, the temporal length of each interval, representing the operational dwell-time $\tau_j = k_{j+1} - k_j$, is governed by a hybrid transition logic. The boundary $k_{j+1}$ is dynamically determined either by reaching a predefined periodic scheduling threshold $T_{\rm sched} \in \mathbb{Z}_{>0}$, or by the asynchronous activation of an event-detection indicator $\mathcal{E}(k) \in \{0, 1\}$ (e.g., an anomaly tracking flag) \cite{tabuada2007event}, \cite{heemels2012introduction}. Mathematically, the next switching instant is defined as

\begin{equation}
\begin{aligned}
k_{j+1}
=
\min \Big\{ k>k_j \;:\;&
(k-k_j\geq T_{\min}) \\
&\wedge
\big((k-k_j\geq T_{\rm sched})
\vee \mathcal{E}(k)=1\big)
\Big\}.
\end{aligned}
\end{equation}
If an event is detected before the minimum dwell-time is
satisfied, the event is stored and the adaptation is delayed
until \(k-k_j\geq T_{\min}\).

Within each operational interval $k \in \mathcal{I}_j$, the local dynamics are approximated by a piecewise linear time-invariant (LTI) representation \cite{rugh1991analytical}
\begin{subequations}
	\begin{align}
		x_{k+1} &\approx A_j x_k + B_j u_k + w_k, \\
		y_k &\approx C_j x_k + v_k,
	\end{align}
	\label{eqSS}
\end{subequations}
where $A_j$, $B_j$, and $C_j$ are the equivalent linear system matrices governing the $j$-th regime. In this context, the stochastic sequences $w_k$ and $v_k$ are used to account for bounded residual errors arising from the local linear approximation of the underlying nonlinear vector fields.

\subsection{Coherent Subspace Identification}
Instead of computing these matrices via analytical linearization, the proposed approach relies on a data-driven identification procedure. Specifically, upon detection of a regime transition at time $k_j$, a local historical window of input–output data pairs $\{u_k, y_k\}$ is used to sequentially identify the corresponding realization $(\hat{A}_j, \hat{B}_j, \hat{C}_j)$ via Coherent Subspace Identification.

To identify the regime-dependent linear matrices $(\hat{A}_j, \hat{B}_j, \hat{C}_j)$, the recursive Subspace State-Space System Identification (4SID) approach of \cite{vanoverschee1996subspace} is employed over a sliding historical data batch of length $W_{id}$. 
In particular, upon the activation of a new interval $\mathcal{I}_j$, the most recent input–output data pairs are arranged into block Hankel matrices. To this end, let $L_h$ denote a user-defined past horizon, with $L_h > n$ and
\begin{equation}
	N = W_{id} - L_h + 1.
	\label{eq00}
\end{equation}
The output block-Hankel matrix $Y_H \in \mathbb{R}^{n_y L_h \times N}$ is obtained as defined in \eqref{eq:generic_hankel}, i.e. 
\begin{equation}
	Y_H = 
	\begin{bmatrix}
		y_1 & y_2 & \dots & y_N \\
		y_2 & y_3 & \dots & y_{N+1} \\
		\vdots & \vdots & \ddots & \vdots \\
		y_{L_h} & y_{L_h+1} & \dots & y_{W_{id}}
	\end{bmatrix}.
\end{equation}
Analogously, the input block-Hankel matrix $U_H \in \mathbb{R}^{n_u L_h \times N}$ is obtained using the input sequence $u_k$.

Although the global physical process in \eqref{nlstate}–\eqref{nloutput} is inherently nonlinear, its behavior within any isolated operational interval $\mathcal{I}_j$ can be approximated by a local LTI realization. Under Assumption \ref{ass:unknowndynpersistentdata}, the input sequence is sufficiently rich to excite all relevant local dynamic modes. In the online implementation, input richness is assessed numerically by checking whether
\begin{equation}
	\operatorname{rank}(U_H) = L_h \cdot n_u. \label{fullrankcond}
\end{equation}
If the rank condition is not satisfied, the adaptation step is
skipped and the current realization is retained.

This rank condition is a fundamental mathematical prerequisite in data-driven systems theory (acting, for instance, as the core hypothesis of Willems' Fundamental Lemma \cite{willems2005note}). In the context of subspace identification, condition \eqref{fullrankcond} guarantees that the input data are sufficiently rich to decouple the forced response from the autonomous dynamics, enabling the consistent algebraic extraction of the local state-space realization.
Moreover, according to standard results in subspace identification theory \cite{katayama2006subspace}, the recursive substitution of the linear state-space equations yields the following fundamental \textit{Data Equation}:
\begin{equation} \label{eq:data_equation}
	Y_H = \Gamma_j X_{k_j} + \mathcal{T}_j U_H + E_H,
\end{equation}
where $X_{k_j}$ is the state sequence matrix starting at the identification instant $k_j$, and $E_H$ is the aggregate perturbation Hankel matrix encapsulating both the stochastic noise sequences ($w_k, v_k$) and the unmodeled nonlinear residual dynamics. The extended observability matrix $\Gamma_j \in \mathbb{R}^{n_y L_h \times n}$ governs the autonomous response mapping, defined as
\begin{equation}
	\Gamma_j = 
	\begin{bmatrix}
		C_j \\
		C_j A_j \\
		\vdots \\
		C_j A_j^{L_h-1}
	\end{bmatrix},
\end{equation}
and $\mathcal{T}_j \in \mathbb{R}^{n_y L_h \times n_u L_h}$ is the lower block-triangular Toeplitz matrix containing the system Markov parameters, representing the forced response. Since the strictly causal physical model implies no direct feedthrough ($D_j = 0$), it is structured as
\begin{equation}
	\mathcal{T}_j = 
	\begin{bmatrix}
		0 & 0 & \dots & 0 \\
		C_j B_j & 0 & \dots & 0 \\
		\vdots & \ddots & \ddots & \vdots \\
		C_j A_j^{L_h-2} B_j & \dots & C_j B_j & 0
	\end{bmatrix}.
\end{equation}
Equation \eqref{eq:data_equation} highlights that the output data are a linear combination of the autonomous response ($\Gamma_j X_{k_j}$) and the forced response ($\mathcal{T}_j U_H$). 

To isolate the subspace spanned by $\Gamma_j$, the influence of the input term $\mathcal{T}_j U_H$ must be nullified. This is achieved by projecting the row space of $Y_H$ onto the orthogonal complement of the row space of $U_H$. The corresponding orthogonal projection operator is defined as:
\begin{equation} \label{eq:projection_matrix}
	\Pi_{U^\perp} = I - U_H^T (U_H U_H^T)^{\dagger} U_H.
\end{equation}
By construction, post-multiplying the input Hankel matrix by this operator yields the zero matrix, i.e., $U_H \Pi_{U^\perp} = \mathbf{0}$. 

In order to isolate the autonomous dynamics from the deterministic forced response, this right-multiplication is applied to the fundamental data equation \eqref{eq:data_equation}, i.e. 
\begin{align} \label{eq:projection_derivation}
	Y_H \Pi_{U^\perp} &= (\Gamma_j X_{k_j} + \mathcal{T}_j U_H + E_H) \Pi_{U^\perp} \nonumber \\
	&= \Gamma_j X_{k_j} \Pi_{U^\perp} + \mathcal{T}_j (U_H \Pi_{U^\perp}) + E_H \Pi_{U^\perp} \nonumber \\
	&= \Gamma_j X_{k_j} \Pi_{U^\perp} + E_H \Pi_{U^\perp}.
\end{align}
Thus, the projected output matrix can be defined as
\begin{equation}
	Y_{clean} = Y_H \Pi_{U^\perp}.
	\label{eq:yclean}
\end{equation}
In an ideal deterministic setting ($E_H = \mathbf{0}$), the column space of $Y_{clean}$ 
is contained in the column space of the extended observability matrix $\Gamma_j$,
and its rank is exactly equal to the system order $n$. However, due to the presence of the perturbation term $E_H \Pi_{U^\perp}$, the matrix becomes full rank. 

Under Assumption \ref{ass:unknowndynpersistentdata}, a Truncated Singular Value Decomposition (TSVD) of rank $n$ is applied to the projected data to robustly extract the underlying unforced system realization while explicitly filtering out the unstructured noise and nonlinear residuals. According to standard subspace identification theory \cite{vanoverschee1996subspace, katayama2006subspace}, truncating the small singular values mathematically separates the dominant dynamic subspace from the noise subspace. Assuming the data window $N$ is sufficiently large, the SVD yields
\begin{align}
	Y_{clean} & = 
	\begin{bmatrix} W_{new} & W_{noise} \end{bmatrix} 
	\begin{bmatrix} \Sigma_{new} & 0 \\ 0 & \Sigma_{noise} \end{bmatrix} 
	\begin{bmatrix} V_{new}^T \\ V_{noise}^T \end{bmatrix} \nonumber \\
	&  \approx W_{new} \Sigma_{new} V_{new}^T, \label{eq:svd}
\end{align}
where $\Sigma_{new} \in \mathbb{R}^{n \times n}$ contains the $n$ dominant singular values corresponding to the physical system modes, while the sub-dominant singular values in $\Sigma_{noise}$ associated with the perturbation $E_H$ are structurally discarded.

In the context of subspace identification, the extracted state-space realization is only defined up to a similarity transformation. As a result, independently identified local models may be expressed in different coordinate bases, generating artificial discontinuities in the estimated system trajectory during regime transitions. To preserve consistency and achieve a \textit{bumpless transfer} across consecutive regime transitions, the newly identified subspace is aligned with the previous one through an Orthogonal Procrustes transformation \cite{schonemann1966generalized}, \cite{golub2013matrix}.  Let $W_{new}$ and $W_{old}$ denote the current and previous observability bases respectively. Then, the alignment matrix is obtained by seeking the optimal orthogonal transformation matrix $T_{align} \in \mathcal{O}(n)$ that minimizes the Frobenius distance between the two subspaces
\begin{equation}
	T_{align} = \arg\min_{T^T T = I} \| W_{new} T - W_{old} \|_F,
	\label{eq:align}
\end{equation}
with $\mathcal{O}(n)$ in (\ref{e1}) denoting the orthogonal group of matrices, whose elements preserve the Euclidean norm.

The exact analytical solution to this orthogonal Procrustes problem is obtained using a well-established algebraic sequence. First, the cross-correlation matrix $M \in \mathbb{R}^{n \times n}$, which quantifies the mutual geometric overlap between the newly extracted and historical subspaces, is constructed as
\begin{equation}
	M = W_{new}^T W_{old},
\end{equation}
and then its Singular Value Decomposition (SVD) is computed, i.e. 
\begin{equation}
	M = W_r \Sigma_r V_r^T.
\end{equation}
The optimal orthogonal transformation that minimizes the Procrustes distance is then synthesized by bypassing the singular value matrix $\Sigma_r$ and evaluating the direct product of the left and right singular vectors
\begin{equation}
	T_{align} = W_r V_r^T. \label{eq:closed_form}
\end{equation}
Once the aligned coherent basis 
\begin{equation}
	W_{coher} = W_{new} T_{align}
	\label{eq:coher}
\end{equation} 
is established, the internal state trajectories must be reconstructed to explicitly capture the forced dynamic response - a necessary condition for consistently estimating the input matrix $\hat{B}_j$. To achieve this, the raw measurement matrix $Y_H$ is orthogonally projected onto the coherent observability space, yielding the baseline state sequence $X_n \in \mathbb{R}^{n \times N}$ for the active operational window
\begin{equation}
	X_n = W_{coher}^{\dagger} Y_H.
\end{equation}
Although this direct pseudo-inversion inherently retains a residual coupling from the forced response dynamics (i.e., the unknown Markov parameters), it provides a computationally efficient approximation of the state sequence. Any induced numerical artefacts are systematically mitigated  via structural regularization during the subsequent least-squares extraction of the LTI system matrices.

To characterize the system transition dynamics, the data sequence is partitioned in time into the current state matrix $X_k$, consisting of the first $N-1$ columns, and the one-step time-shifted state matrix $X_{k+1}$, consisting of the last $N-1$ columns. Correspondingly, let $U_{in} \in \mathbb{R}^{n_u \times (N-1)}$ denote the synchronous input data block aligned with the current state sequence $X_k$. From the current sliding window, the relevant raw inputs are extracted and the input sequence is constructed as
\begin{equation}
	\label{eq:uin}
	U_{in} = [u_1, u_2, \dots, u_{N-1}].
\end{equation}
The corresponding regime-dependent discrete-time system matrices are then estimated as follows
\begin{enumerate}
	\item \textbf{Transition and Observation Matrices ($(\hat{A}_j, \hat{C}_j)$):} To this end, the fundamental shift-invariance property of the extended observability matrix is exploited. Let $\overline{W}_{coher}$ denote the matrix obtained by removing the last $n_y$ block rows of $W_{coher}$, and let $\underline{W}_{coher}$ denote the matrix obtained by removing its first $n_y$ block rows. The transition matrix is then obtained algebraically as
	\begin{equation}
		\hat{A}_j = \overline{W}_{coher}^{\dagger} \, \underline{W}_{coher}.
		\label{eq:hatAj}
	\end{equation}
	Furthermore, the first block row of the observability matrix directly contains the local output mapping. Hence, $\hat{C}_j$ is obtained as the top $n_y \times n$ submatrix of $W_{coher}$
	\begin{equation}
		\hat{C}_j = \begin{bmatrix} I_{n_y} & 0 \end{bmatrix} W_{coher},
		\label{eq:hatCj}
	\end{equation}
	where $I_{n_y}$ is the identity matrix of dimension $n_y$ and $0$ is an appropriately sized zero matrix, isolating the first $n_y$ rows.
	
	\item \textbf{Input Matrix ($\hat{B}_j$):} To mitigate overfitting and attenuate high-energy numerical artifacts induced by uncompensated forced-response residuals, the input matrix is estimated via Tikhonov-regularized least squares \cite{tikhonov1977solutions}:
	\begin{equation}
		\hat{B}_j = (X_{k+1} - \hat{A}_j X_k) U_{in}^T (U_{in} U_{in}^T + \lambda_B I)^{-1},
		\label{eq:hatBj}
	\end{equation}
	where $\lambda_B > 0$ is a strategically selected, small regularization parameter.
\end{enumerate}

The coherently extracted realization $(\hat{A}_j, \hat{B}_j, \hat{C}_j)$ provides an approximation of the local LTI dynamics
of the current operational regime without disrupting the established coordinate frame.

\subsection{Switched Observer and Design Requirements}

Building upon the realization $(\hat{A}_j, \hat{B}_j, \hat{C}_j)$ identified over the active interval $\mathcal{I}_j$, a discrete-time predictor–corrector (filtering) observer is synthesized. In contrast to standard predictive Luenberger architectures, which exhibit an inherent one-step estimation delay, the proposed filtering structure reduces latency by directly incorporating the most recent measurement $y_{k+1}$ into the state update \cite{franklin1998digital}. The observer dynamics are formally defined as
\begin{subequations}
	\begin{align}
		x^p_{k+1} &= \hat{A}_j \hat{x}_k + \hat{B}_j u_k, \label{eq:pred} \\
		\hat{x}_{k+1} &= x^p_{k+1} + L_j (y_{k+1} - \hat{C}_j x^p_{k+1}), \label{eq:corr}
	\end{align}
\end{subequations}
where $x^p_{k+1} \in \mathbb{R}^n$ denotes the \textit{a priori} state prediction, $\hat{x}_{k+1} \in \mathbb{R}^n$ is the \textit{a posteriori} state estimate, and $L_j \in \mathbb{R}^{n \times n_y}$ is the observer gain matrix computed specifically to stabilize the error dynamics during the $j$-th operational regime.

To guarantee the asymptotic convergence of the proposed estimation framework, it is essential to evaluate the nominal error dynamics. Let 
\begin{equation}
	e_k = x_k - \hat{x}_k
	\label{eq012}
\end{equation}
denote the state estimation error. Under nominal conditions, i.e. assuming the coherently extracted matrices $(\hat{A}_j, \hat{B}_j, \hat{C}_j)$ perfectly capture the local linear dynamics and temporarily neglecting the stochastic noise terms, the ideal observation at the next time step is 
\begin{equation}
	y_{k+1} = \hat{C}_j x_{k+1}.
	\label{eq:observation}
\end{equation}
Substituting the \textit{a posteriori} update \eqref{eq:corr} into the error definition at step $k+1$ yields
\begin{equation}
	e_{k+1} = (I - L_j \hat{C}_j)(x_{k+1} - x^p_{k+1}). \label{eq:error_intermediate}
\end{equation}
By expanding the true state transition and subtracting the \textit{a priori} prediction \eqref{eq:pred}, the pre-correction error difference becomes
\begin{equation}
	x_{k+1} - x^p_{k+1} = \hat{A}_j e_k.
\end{equation}
Consequently, by substituting this relation into \eqref{eq:error_intermediate}, the closed-loop nominal error dynamics for the filtering observer are governed by the following autonomous difference equation \cite{simon2006optimal}
\begin{equation}
	e_{k+1} = (I - L_j \hat{C}_j)\hat{A}_j e_k. \label{eq:error_final}
\end{equation}
Therefore, to ensure asymptotic stability over the $j$-th operational interval, the observer gain $L_j$ must be designed such that the matrix $(I - L_j \hat{C}_j)\hat{A}_j$ is Schur stable, i.e., its spectral radius is strictly less than one.

To track the system state under varying operating conditions, the proposed estimator is formulated as a discrete-time switched linear observer. Let $j \in \mathbb{N}$ denote the index of the active operational regime, associated with the discrete-time interval $\mathcal{I}_j = [k_j, k_{j+1}-1]$. The observer dynamics are driven by a piecewise-constant switching signal $\sigma(k)$, such that $\sigma(k) = j$ for all $k \in \mathcal{I}_j$, which selects the active set of system matrices. The observer in \eqref{eq:pred}–\eqref{eq:corr} is then extended to the switched setting as follows
\begin{align}
	x^p_{k} &= \hat{A}_{\sigma(k)} \hat{x}_{k-1} + \hat{B}_{\sigma(k)} u_{k-1}, \label{eq:switched_pred} \\
	\hat{x}_{k} &= x^p_{k} + L_{\sigma(k)} (y_{k} - \hat{C}_{\sigma(k)} x^p_{k}), \label{eq:switched_corr}
\end{align}
where the switching signal $\sigma(k)$ transitions from $j$ to $j+1$ only at the update instant $k_{j+1}$, which is triggered either deterministically by the scheduling period $T_{\rm sched}$ or asynchronously by an anomaly detection event.
Over each operational interval $\mathcal{I}_j$, the signal remains constant, allowing the observer to evolve locally as an LTI system driven by the corresponding gain $L_j$.

As established in the preceding sections, the nonlinear physical process is sequentially approximated across contiguous operational intervals $\mathcal{I}_j$ via a coherent, data-driven piecewise LTI realization $(\hat{A}_j, \hat{B}_j, \hat{C}_j)$. To reconstruct the unmeasurable state trajectories, the discrete-time filtering observer defined by \eqref{eq:switched_pred}-\eqref{eq:switched_corr} has been synthesized. However, in large-scale networked systems, continuously querying the full array of $n_y$ available sensors is often costly in terms 
of computation and energy consumption. Therefore, an effective estimation strategy must not only track the system dynamics but also dynamically select a sparse and informative subset of active sensors \cite{lin2013design}.

Formally, at the onset of each operating interval $\mathcal{I}_j$,   the synthesis problem consists of designing the observer gain matrix $L_j \in \mathbb{R}^{n \times n_y}$ to concurrently satisfy three competing objectives:
\begin{enumerate}
	\item \textbf{State Tracking Performance:} Minimize the \textit{a posteriori} estimation error to ensure accurate reconstruction of the local system trajectories.
	\item \textbf{Switched Stability Requirement:} While it is necessary to enforce the local asymptotic convergence by bounding the spectral radius of the closed-loop transition matrix, i.e.
	\begin{equation}
		\rho\big((I - L_j \hat{C}_j)\hat{A}_j\big) < 1, 
		\label{eq:spectral}
	\end{equation}	
	local Schur stability alone is generally insufficient in switched systems. Therefore, the scheduling strategy  
must enforce conditions ensuring nominal asymptotic stability
and uniform ultimate boundedness  under bounded disturbances and identification mismatches of the overall switched error dynamics \eqref{eq:error_final}. This requires the synthesis of \(L_j\) to be constrained by
stability conditions compatible with the switched-error analysis,
so as to prevent divergent error trajectories during regime
transitions  \cite{liberzon2003switching}.
	\item \textbf{Sensor Sparsity:} Promote column sparsity within the gain matrix $L_j$. Structurally, driving the $i$-th column of $L_j$ to a zero vector implies that the corresponding $y_{k+1}^{(i)}$ measurement channel is completely decoupled from the update equation \eqref{eq:switched_corr}. This is mathematically and operationally equivalent, at the observer level, to deactivating the i-th measurement channel during the current regime. \cite{lin2013design}. 
\end{enumerate}
Finally, the proposed framework must ensure structural resilience by maintaining reliable state estimation during severe transients or hardware faults. In such critical scenarios, it must  prioritize absolute stability and tracking accuracy over sensor sparsity to guarantee fail-safe operation.


Solving this multi-objective framework relies entirely on the continuous stream of real-time operational data.

Consequently,  the basis-aligned realization matrices and the structurally sparse gain $L_j$ are updated at each trigger event for which the available data are sufficiently informative. This requires translating the aforementioned requirements into a mathematically tractable convex optimization routine, ensuring sparse observer updates consistent with the stated
stability requirements.

\section{Adaptive Sparse Observer via Convex Optimization}
\label{sec:opt_formulation}
Large-scale cyber-physical systems may exhibit structural changes and sensor degradation, causing static models to fail. To overcome these limitations, this section introduces a cohesive, adaptive data-driven architecture. In particular, the proposed data-driven framework couples subspace identification with multi-objective convex optimization to dynamically update system realizations, optimize sparse sensor schedules, and ensure stability.


\subsection{Multi-Objective Convex Formulation}
To address the state tracking and sensor sparsity objectives, the synthesis of the observer gain $L_j \in \mathbb{R}^{n \times n_y}$ is recast as a multi-objective convex optimization problem \cite{boyd2004convex}. 
The formulation relies on the synchronous data matrices
obtained from the coherent subspace identification procedure
previously described.

Let $X_k \in \mathbb{R}^{n \times (N-1)}$ and $X_{k+1} \in \mathbb{R}^{n \times (N-1)}$ denote the current and one-step-shifted coherent state sequences, respectively. Furthermore, let $U_{in}$ be the synchronous input data block as defined in \eqref{eq:uin}. To ensure the temporal alignment required by the filtering observer structure \eqref{eq:switched_corr}, the one-step-shifted output measurement matrix $Y_{sync} \in \mathbb{R}^{n_y \times (N-1)}$ is defined as
\begin{equation}
	Y_{sync} = [y_2, y_3, \dots, y_N].
\end{equation}
This matrix contains the real-time measurements used to correct the predictions at each step of the identification batch. Consequently, the \textit{a priori} open-loop state prediction over the entire batch is defined as
\begin{equation}
	X_{pred} = \hat{A}_j X_k + \hat{B}_j U_{in},
\end{equation}
and the baseline output error (innovation sequence) prior to the observer correction as
\begin{equation}
	E_Y = Y_{sync} - \hat{C}_j X_{pred}.
	\label{error}
\end{equation}
Given the operating interval $\mathcal{I}_j$, the optimal gain matrix, denoted as $L_j^*$, is obtained by minimizing a composite cost function $J_j(L)$ that mathematically balances tracking accuracy against structural sparsity
\begin{equation}
	L_j^* = \arg\min_{L \in \mathbb{R}^{n \times n_y}} J_j(L),
\end{equation}
where the objective function is defined as
\begin{equation}
	J_j(L) = \mathcal{J}_{track}(L) + \lambda \mathcal{J}_{sparse}(L) + \gamma \mathcal{J}_{reg}(L). \label{eq:cost_function}
\end{equation}
The term $\mathcal{J}_{track}(L)$ in (\ref{eq:cost_function}) penalizes the \textit{a posteriori} estimation error over the data batch collected during $\mathcal{I}_j$
\begin{equation}
	\mathcal{J}_{track}(L) = \| X_{k+1} - (X_{pred} + L E_Y) \|_F.
\end{equation}
The second term, $\mathcal{J}_{sparse}(L)$, represents the core of the proposed sensor scheduling mechanism. 

To promote the deactivation of redundant sensors while preserving the estimation structure,  an $L_{2,1}$-norm regularization (commonly known as Group Lasso penalty \cite{yuan2006model}) is applied over the columns of the gain matrix
\begin{equation}
	\mathcal{J}_{sparse}(L) = \| L \|_{2,1} = \sum_{i=1}^{n_y} \| L^{(i)} \|_2,
\end{equation}
where $L^{(i)}$ denotes the $i$-th column vector of $L$. Unlike the standard $L_1$-norm, which heuristically promotes unstructured element-wise sparsity, the $L_{2,1}$-norm penalizes the Euclidean norm of entire columns. This regularization promotes group sparsity, encouraging entire columns associated with uninformative measurement channels to become zero. 
If $\|L^{(i)}\|_2=0$, the $i-th$ measurement channel is structurally decoupled from the observer correction step. Hence, the optimization determines an observer-level inactive sensor channel. Physical sensor deactivation or reallocation can subsequently be implemented  \cite{lin2013design}, \cite{joshi2009sensor}.

Finally, the third term, $\mathcal{J}_{reg}(L) = \| L \|_F$, acts as a standard Tikhonov (Ridge) regularization. The Frobenius-norm term penalizes the overall magnitude of the observer gains, helping to prevent numerical ill-conditioning and mitigate the amplification of high-frequency measurement noise. Minimizing the empirical state tracking error $\mathcal{J}_{track}(L)$ over a finite data batch does not, in general, guarantee asymptotic stability of the observer for future out-of-sample trajectories. To enforce stability over the interval $\mathcal{I}_j$, the nominal error dynamics governed by the closed-loop transition matrix $\Phi_j = (I - L \hat{C}_j)\hat{A}_j$ must be asymptotically stable.

While the spectral radius condition $\rho(\Phi_j) < 1$ is non-convex and generally difficult to enforce directly, a sufficient convex condition is introduced by bounding the induced $2$-norm (spectral norm) of the transition matrix \cite{boyd1994linear} 
\begin{equation}
	\| (I - L \hat{C}_j)\hat{A}_j \|_2 \le 1 - \epsilon, \label{eq:hard_constraint}
\end{equation}
where $\epsilon \in (0,1)$ is a small positive stability margin. According to standard matrix analysis \cite{horn2012matrix}, the spectral radius of any square matrix is upper-bounded by its induced 2-norm ($\rho(\Phi_j) \le \| \Phi_j \|_2$). Consequently, satisfying the convex norm constraint \eqref{eq:hard_constraint}, which can be equivalently expressed as an LMI via the Schur complement, provides a sufficient condition to guarantee that the transition matrix $\Phi_j$ is Schur stable.

The complete multi-objective convex optimization problem is formalized as follows
\small
\begin{align}
	L_j^* = \arg\min_{L} \quad & \| X_{k+1} - (X_{pred} + L E_Y) \|_F + \lambda \| L \|_{2,1} + \rho \| L \|_F \label{eq:opt_probl} \\
	\text{s.t.} \quad & \| (I - L \hat{C}_j)\hat{A}_j \|_2 \le 1 - \epsilon \label{eq:cvx_problem} \\
	& L_{:, i} = \mathbf{0}, \quad \forall i \in \mathcal{S}_{excl} \label{eq:excl_constraint}
\end{align}
\normalsize
where $\mathcal{S}_{excl} \subset \{1, \dots, n_y\}$ denotes the set of indices corresponding to sensors that are marked as unavailable during the current interval $\mathcal{I}_j$.

\begin{remark}[Sparsity vs. Stability Trade-off] \label{rem:sparsity}
The \(L_{2,1}\)-norm regularization promotes column sparsity in the observer gain matrix \(L\), and therefore encourages the deactivation of entire measurement channels. However, the resulting sensor set should be interpreted as a sparsity-promoting convex relaxation of the original minimum-cardinality selection problem. Hence, the method does not claim global optimality with respect to the exact number of selected sensors, but rather computes a sparse sensor configuration compatible with the imposed tracking and stability requirements. \demo
\end{remark}

\begin{remark}[Conservatism vs. Structural Sparsity Trade-off] \label{rem:feasibility}
The induced 2-norm constraint in \eqref{eq:hard_constraint} is a sufficient, but generally conservative, condition for Schur stability. Indeed, it enforces a monotone contraction of the
estimation error in the Euclidean norm, whereas Schur stability only requires the spectral radius of the closed-loop matrix to be strictly smaller than one. As a consequence, for highly
coupled or non-normal identified realizations, the sparse optimization problem may become infeasible even when a stabilizing observer gain exists.

This conservatism is accepted in order to preserve convexity with respect to the observer gain \(L\) and to apply the \(L_{2,1}\)-norm penalty directly to the columns of \(L\). This is important for sensor scheduling, since a zero column of \(L\) corresponds directly to an inactive measurement channel. Alternative Lyapunov-based LMI formulations, typically based on the change of variables \(Y=PL\), may reduce conservatism or improve feasibility. However, in that case sparsity would be promoted on \(Y\), not directly on \(L=P^{-1}Y\), and column sparsity of the actual observer gain would no longer be guaranteed.

If the sparse norm-constrained problem is infeasible, the adaptive architecture switches to a dense stabilizing fallback gain, for instance computed through a standard Riccati-based observer design. The stability results of Theorems~1--2 apply to any gain, sparse or dense, satisfying the prescribed contraction condition. Therefore, during fallback operation, the dense gain must either satisfy the same bound 
\[
\|(I-L\hat C_j)\hat A_j\|_2 \leq 1-\epsilon,
\]
or its corresponding contraction factor must be explicitly used in the dwell-time analysis. In this case, stability is prioritized over sparsity, and the sparse sensor schedule is restored once the convex sparse design becomes feasible again. \demo   
\end{remark}

\begin{remark}[Pre-conditioned Relaxation of the Monotonicity Constraint] \label{rem:preconditioning}
	To reduce the conservatism of the unweighted 2-norm constraint (\ref{eq:cvx_problem}) and the risk of primal infeasibility during large transients, an alternative preconditioning approach can be adopted.  In particular, a nominal fully dense stabilizing gain can be computed (e.g., via standard LQ observer design) to derive a valid Lyapunov similarity transformation matrix $T$. Then, the convex optimization \eqref{eq:opt_probl} can be solved by replacing the unweighted constraint with its similarity-transformed counterpart
	\begin{equation}
		\| T (I - L \hat{C}_j)\hat{A}_j T^{-1} \|_2 \le 1 - \epsilon.
		\label{eq101}
	\end{equation}
	Since the transformation $T$ is fixed \textit{a priori} as a constant numerical parameter, the optimization problem remains convex with respect to the decision variable $L$, thereby preserving the exact application of the $L_{2,1}$-norm structural sparsity penalty. However, this formulation is inherently suboptimal, as the static transformation $T$, originally tuned for the dense unconstrained case, imposes an implicit restriction on the sparse solution space, which may limit the achievable tracking performance. \demo
\end{remark}

\subsection{Real-Time Dual-Rate Implementation}
To balance real-time tracking latency with the computational overhead of identification and optimization, the observer employs a dual-rate architecture:
\begin{itemize}
	\item \textbf{Fast Loop}: Operates at every discrete time step $k$ as a recursive predictor-corrector. It uses frozen system matrices and the fixed sparse gain $L_j^*$, processing only the active measurement channels to minimize telemetry bandwidth and latency.
\item \textbf{Slow Loop}: Triggered periodically ($T_{\rm sched}$) or asynchronously by anomaly-driven residual spikes ($\Vert{} E_Y \Vert{}_2 > \tau_{th}$), provided the minimum dwell-time condition ($T_j \geq T_{\min}$) is met. Upon activation, it temporarily reactivates all candidate sensors to extract a full-dimensional recent data window, performs Procrustes-aligned subspace identification, and solves the convex optimization problem to update the local dynamics and synthesize a new fault-aware sensor schedule.
\end{itemize}

The complete execution flow of the proposed framework is formalized in Algorithm \ref{alg:adaptive_sparse_observer}.


\small
\begin{algorithm}[t]
	\caption{Adaptive Data-Driven Sparse Observer}
	\label{alg:adaptive_sparse_observer}
	\begin{algorithmic}[1]
		\Require Data stream $\{u_k,y_k\}$, window $W_{id}$, parameters $\lambda$, $\epsilon$, $T_{\rm sched}$, $T_{\min}$, residual threshold $\tau_{th}$.
		
		\State \textbf{Initialization (Batch Phase):}
		\State Extract initial data batch over $W_{id}$.
		\State Compute initial realization $(\hat A_0,\hat B_0,\hat C_0)$ via Subspace ID.
		\State Store initial observability basis as $W_{old}$.
		\State Solve \eqref{eq:opt_probl} to extract the initial sparse gain $L_0^\ast$.
		\State Initialize regime index $j=0$, transition timer $k_j=W_{id}$, and anomaly flag $\mathcal F_k=0$.
		
		\State \textbf{Online Operation (Fast Loop):}
		\For{each time step $k>W_{id}$}
			\State Acquire input $u_{k-1}$ and measurements $y_k$ from the active sensor subset.
			\State Predict: $x_k^p=\hat A_j\hat x_{k-1}+\hat B_j u_{k-1}$.
			\State Compute innovation: $E_{Y,k}=y_k-\hat C_j x_k^p$.
			\State Correct: $\hat x_k=x_k^p+L_j^\ast E_{Y,k}$.
			
			\State \textbf{Adaptation Trigger (Slow Loop):}
			\If{$(k-k_j)\geq T_{\min}$ \textbf{and} $\big((k-k_j)\geq T_{\rm sched}$ \textbf{or} $\|E_{Y,k}\|_2>\tau_{th}$ \textbf{or} $\mathcal F_k=1\big)$}
				\State Extract recent data window over $[k-W_{id},k]$ and construct Hankel matrices.
				\State Reset anomaly flag: $\mathcal F_k\gets 0$.
				
				\If{$\operatorname{rank}(U_H)<L_h\cdot n_u$}
					\State Skip adaptation: data are not persistently exciting.
					\State Hold current matrices and gain.
				\Else
					\State Compute TSVD to extract the new unaligned basis $W_{new}$.
					\State Solve Procrustes alignment:
					\[
					T_{align}=
					\arg\min_{T^T T=I}
					\|W_{new}T-W_{old}\|_F .
					\]
					\State Update coherent basis: $W_{coher}\gets W_{new}T_{align}$.
					\State Extract updated transition matrices $(\hat A_{j+1},\hat B_{j+1},\hat C_{j+1})$.
					\State Define $\mathcal S_{excl}$ if faulty sensors are isolated.
					\State Solve \eqref{eq:opt_probl} for $L_{j+1}^\ast$; apply fallback to a dense gain if primal infeasible.
					\State Set $W_{old}\gets W_{coher}$, update $j\gets j+1$, and reset timer $k_j\gets k$.
				\EndIf
				
			\ElsIf{$\|E_{Y,k}\|_2>\tau_{th}$ \textbf{and} $(k-k_j)<T_{\min}$}
				\State Store anomaly flag: $\mathcal F_k\gets 1$.
			\EndIf
		\EndFor
	\end{algorithmic}
\end{algorithm}
\normalsize

\section{Properties}
\label{sec:properties}

\begin{theorem}[Intra-interval Stability and Robust Boundedness]\label{thm:stability}
Consider the local LTI approximation associated with the
active operational interval \(I_j\). Let
\[
\Phi_j = (I-L_j^\ast \hat C_j)\hat A_j
\]
be the nominal closed-loop error transition matrix of the
filtering observer. Assume that the gain \(L_j^\ast\) satisfies
the contraction condition
\[
\|\Phi_j\|_2 \leq 1-\epsilon,
\qquad \epsilon\in(0,1).
\]
Then, in the nominal disturbance-free case, the estimation
error satisfies
\[
\|e_k\|_2
\leq
(1-\epsilon)^{k-k_j}\|e_{k_j}\|_2,
\qquad k\in I_j,
\]
and therefore converges exponentially to zero along the
interval.
Moreover, suppose that the effects of process disturbances,
measurement noise, finite-data identification errors, and local
linearization residuals can be collected into an additive
perturbation term \(d_k\) satisfying
\[
\|d_k\|_2 \leq \bar d_j,
\qquad k\in I_j.
\]
Then the perturbed intra-interval error is uniformly ultimately
bounded. In particular,
\begin{equation}\label{eq1}
\|e_k\|_2
\leq
(1-\epsilon)^{k-k_j}\|e_{k_j}\|_2
+
\frac{1-(1-\epsilon)^{k-k_j}}{\epsilon}\bar d_j,
\end{equation}
and hence ultimately
\[
\|e_k\|_2
\leq
\frac{\bar d_j}{\epsilon}
\]
\end{theorem}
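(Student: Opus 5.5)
The plan is to write the error dynamics over $\mathcal{I}_j$ as a linear recursion driven by $\Phi_j$ and then take norms step by step. In the nominal case, the derivation leading to \eqref{eq:error_final} gives $e_{k+1}=\Phi_j e_k$ for all consecutive $k,k+1\in\mathcal{I}_j$; for $k\in\mathcal{I}_j$ the switching signal is fixed at $\sigma(k)=j$, so the same matrix applies throughout. Unrolling the recursion gives $e_k=\Phi_j^{\,k-k_j}e_{k_j}$. Submultiplicativity of the induced $2$-norm then yields $\|e_k\|_2\le\|\Phi_j\|_2^{\,k-k_j}\|e_{k_j}\|_2\le(1-\epsilon)^{k-k_j}\|e_{k_j}\|_2$, and exponential convergence follows because $1-\epsilon\in(0,1)$.

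For the perturbed case, I will model the error dynamics as $e_{k+1}=\Phi_j e_k+d_k$. Here $d_k$ collects the mapped process and measurement noise, for instance terms of the form $(I-L_j^\ast\hat C_j)w_k-L_j^\ast v_{k+1}$, together with the identification and linearization mismatches. The statement defines $d_k$ precisely as this collected additive term, so I will take the representation as given rather than derive explicit bounds on each contribution. Setting $a_k=\|e_k\|_2$, the triangle inequality gives the scalar comparison inequality $a_{k+1}\le(1-\epsilon)a_k+\bar d_j$. By induction on $k-k_j$, or equivalently by expanding $e_k=\Phi_j^{\,k-k_j}e_{k_j}+\sum_{i=k_j}^{k-1}\Phi_j^{\,k-1-i}d_i$ and bounding each term, I obtain
\[
a_k\le(1-\epsilon)^{k-k_j}a_{k_j}+\bar d_j\sum_{m=0}^{k-k_j-1}(1-\epsilon)^m .
\]
Summing the geometric series gives $\bigl(1-(1-\epsilon)^{k-k_j}\bigr)/\epsilon$, which is exactly \eqref{eq1}.

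For ultimate boundedness, the bracket in \eqref{eq1} is at most $1$, so $\|e_k\|_2\le(1-\epsilon)^{k-k_j}\|e_{k_j}\|_2+\bar d_j/\epsilon$ for every $k\in\mathcal{I}_j$. The transient term decays geometrically, so for any $\delta>0$ the error enters and stays in the ball of radius $\bar d_j/\epsilon+\delta$ after $\lceil\log(\delta/\|e_{k_j}\|_2)/\log(1-\epsilon)\rceil$ steps. The ultimate bound $\bar d_j/\epsilon$ is reached in the limit, provided the interval is long enough or is extended indefinitely. The bound is uniform in $e_{k_j}$ in the usual UUB sense. The only delicate point is interpretive rather than technical: justifying that the mismatch between the true nonlinear error propagation and $\Phi_j$ can be absorbed into a bounded additive $d_k$. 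Since the theorem assumes this, I will state it explicitly and note that the rest is elementary. I will also note that for a finite interval, \emph{ultimately} refers to the asymptotic regime as $k-k_j$ grows.
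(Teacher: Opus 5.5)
Your proposal is correct and follows essentially the same route as the paper. The nominal step $e_{k+1}=\Phi_j e_k$ is handled with the induced-norm bound, and the perturbed recursion $e_{k+1}=\Phi_j e_k+d_k$ is reduced by the triangle inequality to the scalar comparison $\|e_{k+1}\|_2\le(1-\epsilon)\|e_k\|_2+\bar d_j$, summed as a geometric series. Your explicit noise term $(I-L_j^\ast\hat C_j)w_k-L_j^\ast v_{k+1}$ and your note that $\bar d_j/\epsilon$ is reached only asymptotically are minor refinements of the paper's informal ``ultimately'' claim.
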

Moreover, by assuming that the regime-dependent bounds are uniformly bounded, i.e.,
$\sup_j \bar d_j < \infty$, the perturbed intra-interval estimation error is uniformly ultimately bounded.

\begin{proof}
In the nominal disturbance-free case, the local filtering
observer yields the error dynamics
\[
e_{k+1}
=
(I-L_j^\ast \hat C_j)\hat A_j e_k
=
\Phi_j e_k .
\]
Taking the Euclidean norm and using the induced matrix norm
gives
\[
\|e_{k+1}\|_2
\leq
\|\Phi_j\|_2\|e_k\|_2 .
\]
Since \(\|\Phi_j\|_2\leq 1-\epsilon\), with
\(\epsilon\in(0,1)\), it follows that
\[
\|e_{k+1}\|_2
\leq
(1-\epsilon)\|e_k\|_2 .
\]
By recursive application from \(k_j\) to any \(k\in I_j\), one
obtains
\[
\|e_k\|_2
\leq
(1-\epsilon)^{k-k_j}\|e_{k_j}\|_2 .
\]
Thus, the nominal estimation error converges exponentially to
zero inside the interval.

In the perturbed case, the effects of process disturbances,
measurement noise, finite-data identification errors, and local
linearization residuals are collected into an additive term
\(d_k\), so that
\[
e_{k+1}=\Phi_j e_k+d_k .
\]
Taking norms yields
\[
\|e_{k+1}\|_2
\leq
\|\Phi_j\|_2\|e_k\|_2+\|d_k\|_2 .
\]
Using \(\|\Phi_j\|_2\leq 1-\epsilon\) and
\(\|d_k\|_2\leq \bar d_j\), we obtain
\[
\|e_{k+1}\|_2
\leq
(1-\epsilon)\|e_k\|_2+\bar d_j .
\]
Iterating this scalar comparison inequality gives
\[
\|e_k\|_2
\leq
(1-\epsilon)^{k-k_j}\|e_{k_j}\|_2
+
\sum_{\ell=0}^{k-k_j-1}(1-\epsilon)^\ell \bar d_j .
\]
Since
\[
\sum_{\ell=0}^{k-k_j-1}(1-\epsilon)^\ell
=
\frac{1-(1-\epsilon)^{k-k_j}}{\epsilon},
\]
it follows that
\[
\|e_k\|_2
\leq
(1-\epsilon)^{k-k_j}\|e_{k_j}\|_2
+
\frac{1-(1-\epsilon)^{k-k_j}}{\epsilon}\bar d_j .
\]
for any $k\in I_j$. As the interval length increases, the transient term decays exponentially and the error approaches the ultimate bound 
\[
\|e_k\|_2
\leq
\frac{\bar d_j}{\epsilon}.
\]
Finally, if the regime-dependent bounds are uniformly bounded, i.e., $\sup_j \bar d_j < \infty$, then the boundedness is also uniform with respect to $j$. By defining $\bar d := \sup_j \bar d_j$ and replacing $\bar d_j$ with $\bar d$ in (\ref{eq1}), one has that the perturbed intra-interval estimation error is uniformly ultimately bounded
\end{proof}

\begin{theorem}[Switched stability under bounded transition mismatch] \label{thm:switching}
Consider the nominal switched estimation error dynamics. Assume that, for each active regime $j$, the sparse observer gain $L_j^\ast$ satisfies
\begin{equation*}
\|(I-L_j^\ast \hat C_j)\hat A_j\|_2 \leq \alpha < 1.
\end{equation*}
Moreover, assume that at each switching instant $k_j$, the transition between consecutive identified coordinates satisfies
\begin{equation*}
\|e_{k_j}^{+}\|_2 \leq \bar{\mu}\|e_{k_j}^{-}\|_2 + \bar{\delta},
\end{equation*}
with $\bar{\mu}\geq 1$ and $\bar{\delta}\geq 0$. If the dwell-time $\tau_j$ satisfies $\bar{\mu}\alpha^{\tau_j} < 1$,
then the nominal switched observer error converges asymptotically to zero when $\bar{\delta}=0$. If $\bar{\delta}>0$, the error is uniformly ultimately bounded.
\end{theorem}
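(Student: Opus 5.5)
The plan is to reduce the switched error dynamics to a scalar discrete-time comparison recursion sampled at the switching instants, then treat it as a contraction map with an additive offset.

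First, within each interval \(\mathcal I_j\) I will invoke Theorem~\ref{thm:stability} in the nominal case, replacing \(1-\epsilon\) by \(\alpha\). Starting from the post-switch error \(e_{k_j}^{+}\), this gives
\[
\|e_{k_{j+1}}^{-}\|_2 \le \alpha^{\tau_j}\|e_{k_j}^{+}\|_2 .
\]
Here \(e_{k_{j+1}}^{-}\) is the error reached at the end of the interval, just before the coordinate change. I will also record the intra-interval bound \(\|e_k\|_2\le \|e_{k_j}^{+}\|_2\) for all \(k\in\mathcal I_j\), which holds because \(\alpha<1\). It lets conclusions about the sampled sequence transfer to the whole trajectory.

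Second, I will compose this with the jump assumption at \(k_{j+1}\). Define \(z_j:=\|e_{k_j}^{+}\|_2\). Then
\[
z_{j+1}\le \bar\mu\,\alpha^{\tau_j} z_j+\bar\delta .
\]
The dwell-time hypothesis \(\bar\mu\alpha^{\tau_j}<1\) must be used uniformly. I will read it as the existence of a common lower bound \(\tau_j\ge\tau_{\min}\) with \(\beta:=\bar\mu\alpha^{\tau_{\min}}<1\). Since \(\alpha<1\), the factor \(\alpha^{\tau_j}\) is nonincreasing in \(\tau_j\), so \(\bar\mu\alpha^{\tau_j}\le\beta\) for every \(j\). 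In the algorithm this is exactly what \(T_{\min}\) provides.

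The uniformity issue is the main obstacle. A per-interval condition \(\bar\mu\alpha^{\tau_j}<1\) alone does not rule out factors accumulating to \(1\), so I will state explicitly that the condition is understood with a uniform minimum dwell-time \(\tau_j\ge T_{\min}\). If \(\tau_j\) is not uniformly bounded below, I would instead argue with \(\beta:=\sup_j\bar\mu\alpha^{\tau_j}\) and require \(\beta<1\).

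Third, I will iterate the scalar recursion:
\[
z_j\le \beta^{j}z_0+\bar\delta\sum_{\ell=0}^{j-1}\beta^\ell\le \beta^j z_0+\frac{\bar\delta}{1-\beta}.
\]
Two cases follow.
\begin{itemize}
\item If \(\bar\delta=0\), then \(z_j\to0\) geometrically. Since \(\|e_k\|_2\le z_j\) on \(\mathcal I_j\), the full error converges to zero. The switching sequence is infinite, so \(j\to\infty\) as \(k\to\infty\); in the case of finitely many switches, the last interval is infinite and Theorem~\ref{thm:stability} gives convergence directly.
\item If \(\bar\delta>0\), then \(\limsup_j z_j\le \bar\delta/(1-\beta)\). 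Hence \(\|e_k\|_2\le \beta^{j}z_0+\bar\delta/(1-\beta)\) for \(k\in\mathcal I_j\). This yields uniform ultimate boundedness with ultimate bound \(\bar\delta/(1-\beta)\), which does not depend on the initial condition.
\end{itemize}

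Optionally, I could sharpen the continuous-time bound by using a maximum dwell-time \(T_{\rm sched}\) to express the rate as \(\beta^{(k-k_0)/T_{\rm sched}}\), giving exponential decay in \(k\).
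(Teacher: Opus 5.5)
Your proposal is correct and follows the paper's proof essentially step for step: intra-interval contraction $\|e_{k_{j+1}}^{-}\|_2\le\alpha^{\tau_j}\|e_{k_j}^{+}\|_2$, composition with the jump bound, and iteration of $z_{j+1}\le\beta z_j+\bar\delta$ with a uniform $\beta<1$ (which the paper also imposes explicitly), followed by transfer of the sampled bound to the whole trajectory via intra-interval contraction. One side remark: the uniformity you single out as the main obstacle is automatic here, because $\bar\mu$ and $\alpha$ are fixed and each $\tau_j$ is a positive integer, so $\bar\mu\alpha^{\tau_j}<1$ forces $\tau_j\ge m^\ast$, where $m^\ast$ is the smallest integer with $\bar\mu\alpha^{m^\ast}<1$; hence $\beta:=\bar\mu\alpha^{m^\ast}<1$ works for every $j$ and the factors cannot accumulate at $1$.
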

\begin{proof}
Let \(k_j\) denote the \(j\)-th switching instant and let
\(\tau_j = k_{j+1}-k_j\) be the corresponding dwell-time.
For each interval \(I_j=[k_j,k_{j+1})\), define the local
closed-loop error transition matrix as
\[
\Phi_j = (I-L_j^\ast \hat C_j)\hat A_j .
\]
By assumption, the observer gain \(L_j^\ast\) satisfies
\[
\|\Phi_j\|_2 \leq \alpha < 1,
\]
uniformly for all active regimes \(j\). Therefore, within the interval \(I_j\), the estimation
error satisfies
\[
\|e_{k+1}\|_2 \leq \alpha \|e_k\|_2,
\qquad k\in I_j .
\]
By recursive application over the whole interval, one obtains
\[
\|e_{k_{j+1}^{-}}\|_2
\leq
\alpha^{\tau_j}\|e_{k_j^{+}}\|_2,
\]
where \(e_{k_j^{+}}\) and \(e_{k_{j+1}^{-}}\) denote,
respectively, the estimation error immediately after the
switch at \(k_j\) and immediately before the switch at
\(k_{j+1}\).

At the switching instant, the observer realization is updated
through the newly identified and aligned state-space
coordinates. Due to finite-data identification errors,
unmodelled nonlinear residuals, and possible imperfect state
reset, the transition between consecutive regimes is assumed
to satisfy the bounded jump inequality
\[
\|e_{k_{j+1}^{+}}\|_2
\leq
\bar\mu \|e_{k_{j+1}^{-}}\|_2 + \bar\delta,
\]
where \(\bar\mu\geq 1\) is a uniform upper bound on the
multiplicative transition amplification and
\(\bar\delta\geq 0\) is a uniform upper bound on the additive
transition mismatch.

Combining the intra-interval contraction with the switching
jump bound yields
\[
\|e_{k_{j+1}^{+}}\|_2
\leq
\bar\mu \alpha^{\tau_j}\|e_{k_j^{+}}\|_2+\bar\delta .
\]
Assume now that the dwell-time satisfies
\[
\bar\mu \alpha^{\tau_j} \leq \beta < 1,
\]
uniformly for all \(j\). Equivalently, it is sufficient that
\[
\tau_j >
-\frac{\ln \bar\mu}{\ln \alpha},
\]
since \(0<\alpha<1\). Then, the sampled error sequence
\[
z_j := \|e_{k_j^{+}}\|_2
\]
satisfies the scalar comparison inequality
\[
z_{j+1} \leq \beta z_j + \bar\delta .
\]
Iterating this recursion gives
\[
z_j
\leq
\beta^j z_0
+
\sum_{\ell=0}^{j-1}\beta^\ell \bar\delta
=
\beta^j z_0
+
\frac{1-\beta^j}{1-\beta}\bar\delta .
\]
Hence,
\[
\limsup_{j\to\infty} z_j
\leq
\frac{\bar\delta}{1-\beta}.
\]
Therefore, the switched estimation error is uniformly
ultimately bounded. In the nominal case, namely when the
transition mismatch is absent and \(\bar\delta=0\), the above
bound reduces to
\[
z_j \leq \beta^j z_0,
\]
which implies
\[
\lim_{j\to\infty} \|e_{k_j^{+}}\|_2 = 0.
\]
Since the error also contracts inside each interval according
to
\[
\|e_k\|_2 \leq \alpha^{k-k_j}\|e_{k_j^{+}}\|_2,
\qquad k\in I_j,
\]
it follows that the complete switched error trajectory
converges asymptotically to zero in the nominal case.
Finally, when bounded process disturbances, measurement
noise, and local linearization residuals are included, their
effect can be collected into an additional bounded additive
term in the error recursion. This preserves the same scalar
comparison structure, with a possibly larger ultimate bound.
Consequently, the perturbed switched observer is uniformly
ultimately bounded.
\end{proof}

\begin{remark}[Role of Orthogonal Procrustes Alignment]\label{rem:arbitrary_switching}
The dwell-time condition in Theorem~2 depends on the
transition amplification factor \(\bar\mu\), which accounts for
possible error growth at switching instants. If consecutive
realizations were related by arbitrary similarity
transformations \(T\in GL(n)\), where \(GL(n)\) denotes the
general linear group of nonsingular matrices, the induced norm
of the transition map could be large or poorly conditioned
\cite{horn2012matrix}.

The Orthogonal Procrustes alignment reduces this effect by
restricting the coordinate update to the orthogonal group
\(O(n)\) \cite{schonemann1966generalized}. In the ideal case
of exact coordinate consistency and perfect state reset, the
transformation is norm-preserving, namely
\[
\|T_{\rm align}e\|_2=\|e\|_2,
\qquad
T_{\rm align}^{\top}T_{\rm align}=I,
\]
so that one may take \(\bar\mu=1\). In practical data-driven
settings, finite-data errors, measurement noise, and unmodelled
nonlinear residuals may still introduce a bounded mismatch,
represented by
\[
\|e_{k_j}^{+}\|_2
\leq
\bar\mu\|e_{k_j}^{-}\|_2+\bar\delta .
\]
Thus, Procrustes alignment should be interpreted not as an
unconditional guarantee of a common Lyapunov function for the
switched error dynamics \cite{liberzon2003switching}, but as a
coordination mechanism that prevents artificial amplification
due to arbitrary coordinate changes and helps keep the
switching mismatch uniformly bounded.\demo
\end{remark}
\begin{figure}[H]
    \centering
        \includegraphics[width=0.45\textwidth]{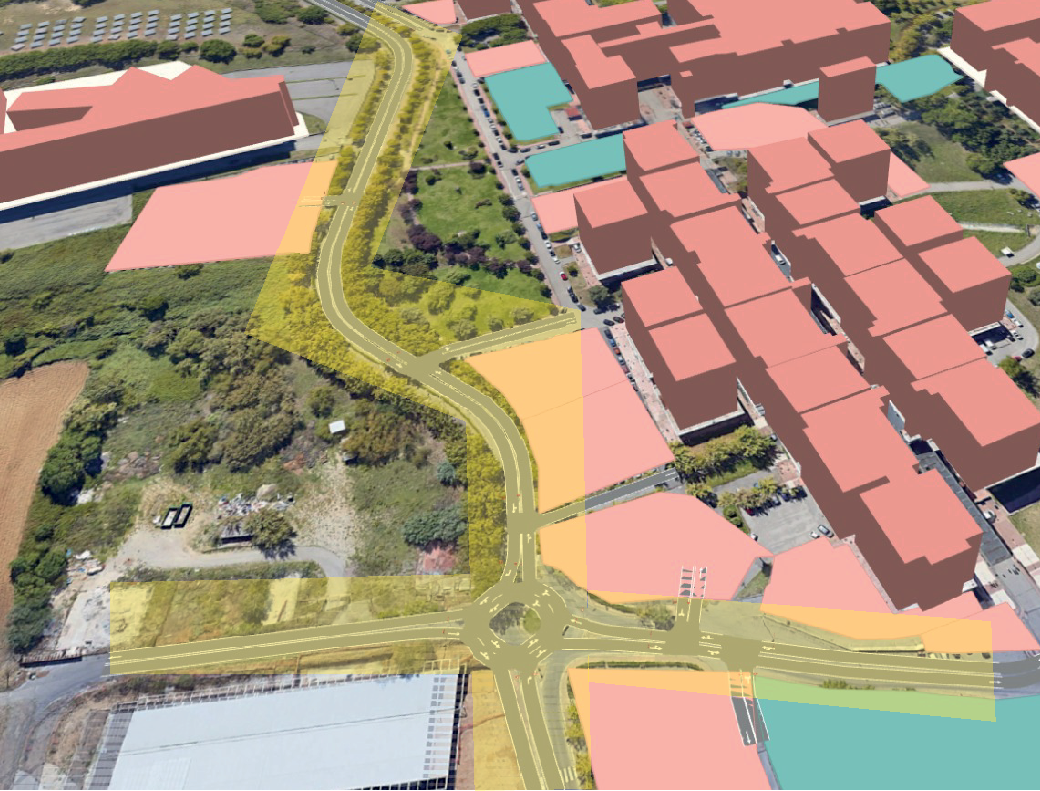}
    \caption{Macroscopic view of a portion of the simulated urban road network, based on the internal campus of the University of Calabria (Italy). The yellow shaded areas explicitly denote a portion of the links comprising the physical plant for the evaluation of the data-driven observer.}
    \label{fig:aimsun}
\end{figure}

\section{Numerical Validation}
\label{sec:simulation}
%
\begin{figure*}[htbp]
    \centering
        \includegraphics[width=0.96\textwidth]{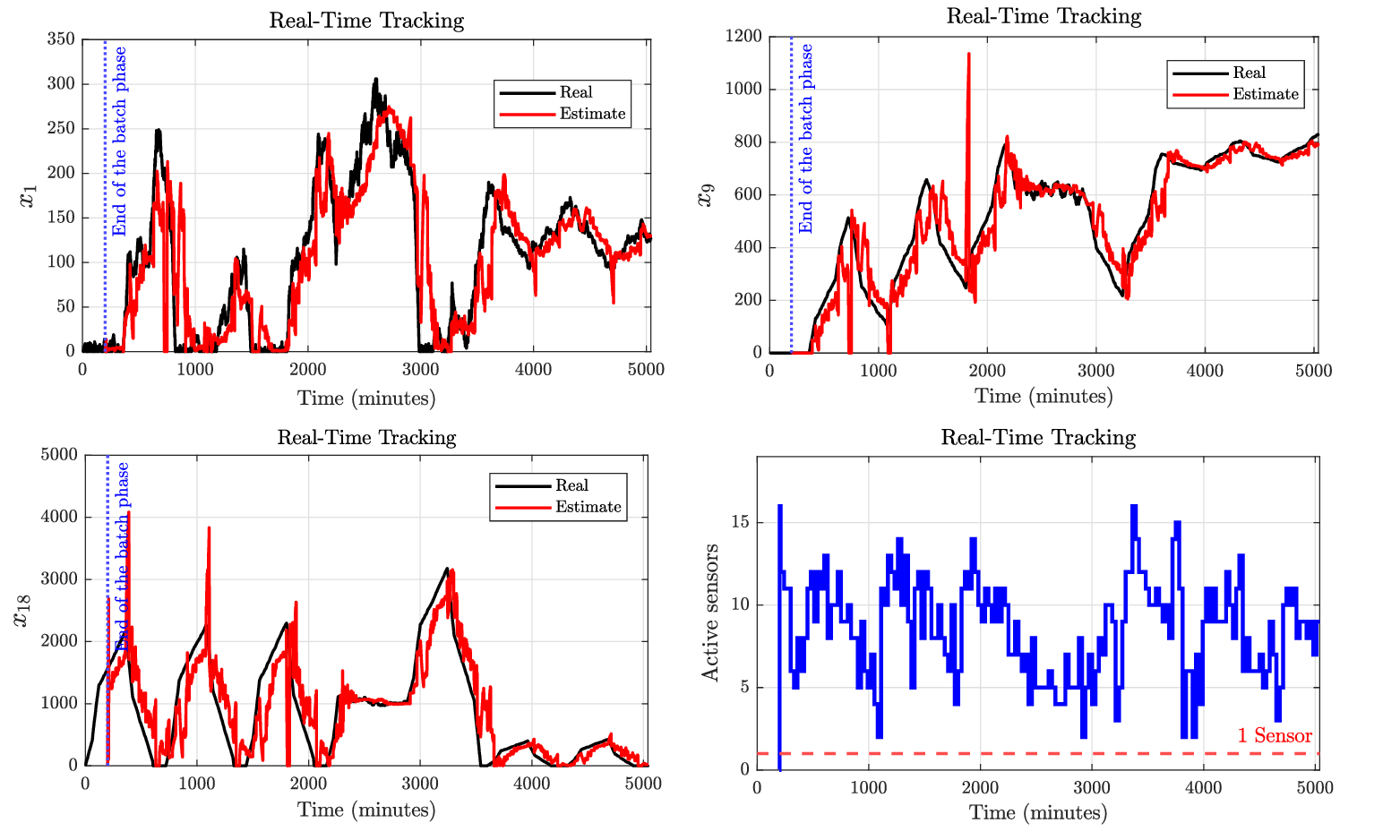}
    \caption{Performance evaluation on the $18$-link network simulated in Aimsun Next. The first three plots show the real-time tracking of representative states ($x_1$, $x_9$, and $x_{18}$), comparing the microscopic ground truth (black) against the observer estimates (red). The bottom-right plot illustrates the dynamic cardinality of the active sensor set synthesized via the $L_{2,1}$-norm optimization.}
    \label{fig:aimsun_results}
\end{figure*}

Validation is performed using the Aimsun Next microscopic simulator \footnote{https://www.aimsun.com/}, where stochastic vehicle behaviors (car-following, lane-changing, gap-acceptance) aggregate into a complex, non-stationary, and unknown nonlinear macroscopic process. The experimental setup is defined as follows. The simulated road network comprises 18 links (Fig. \ref{fig:aimsun}) and 5040-minute simulation horizon is considered. Due to time-varying Origin-Destination (OD) demands the network continuously transitions between fundamentally different dynamic regimes (e.g., peak hours, congestion build-ups, and free-flow traffic). The state vector $x_k \in \mathbb{R}^{18}$ is constructed by averaging vehicle interactions over discrete sampling intervals to extract continuous link-level macroscopic variables. Each candidate sensor provides a noisy, direct measurement of its corresponding link-level macroscopic state. The validation is thus framed as a sparse measurement scheduling problem, dynamically deactivating sensors in links where the local dynamics become highly predictable or redundant.

To explicitly link the theoretical guarantees with the experimental setup, the algorithmic parameters and stability metrics were directly evaluated during the simulation. Specifically, the dwell-time (in this setup $\tau_j = T_{\rm sched} = 30$) and the chosen local contraction margin ($\alpha = 0.98$) yield a decay factor of $\alpha^{\tau_j} = 0.545$. Evaluations across all scheduled transitions confirm that the maximum observed coordinate amplification was $\bar{\mu} \approx 1.774$. Consequently, the switched stability condition (Theorem \ref{thm:switching}) ($\bar{\mu}\alpha^{\tau_j} \approx 0.967 < 1$) is strictly satisfied. Consequently, the sufficient switched-stability condition in Theorem 2 is satisfied, ensuring uniform ultimate boundedness of the switched estimation error in the presence of unmodeled nonlinear residuals ($\bar{\delta} \approx 30.86$).

As illustrated in Fig. \ref{fig:aimsun_results}, the simulation begins with an initial 200-minute phase strictly dedicated to batch data collection. This period is used to extract the initial subspace realization $(\hat{A}_0, \hat{B}_0, \hat{C}_0)$ without applying any online correction. Following this initialization, as formalized in Algorithm \ref{alg:adaptive_sparse_observer}, the local system matrices $(\hat{A}_j, \hat{B}_j, \hat{C}_j)$ are periodically re-identified and updated every $T_{\rm sched}$ to capture the time-varying traffic regimes. Once the predictor-corrector architecture is engaged, the observer tracks the ground-truth macroscopic trajectories for representative states (such as $x_1$, $x_9$, and $x_{18}$), effectively handling the transients inherent to microscopic traffic data.
The most relevant validation of the multi-objective convex formulation is shown in the bottom-right plot of Fig. \ref{fig:aimsun_results}, which illustrates real-time sensor selection. Although full-state direct measurement over the complete network requires $18$ localized sensors, the $L_{2,1}$ group Lasso penalty systematically enforces sparsity by driving uninformative columns of the observer gain $L_j$ to zero. As a result, the framework adapts to evolving traffic conditions by dynamically reducing the active sensor set (typically ranging from $2$ to $15$). This successfully reduces telemetry requirements while maintaining satisfactory tracking performance.
 
To provide a fair performance evaluation, since causal filtering introduces a phase lag during rapid congestion transitions, the Normalized Time-Shifted RMSE (NTS-RMSE) metric is employed instead of the standard RMSE. 
First, the inherent algorithmic latency $\tau_{lag}^{(i)}$ for each state $i$ is identified by maximizing the empirical cross-correlation \cite{shumway2017time}:
\begin{equation}
\tau_{lag}^{(i)} = \arg\max_{\tau} \sum_{k} x^{(i)}_k \hat{x}^{(i)}_{k-\tau}.
\end{equation}
The estimated trajectory is then time-aligned ($\hat{x}^{(i)}_{k+\tau_{lag}^{(i)}}$) over the overlapping window of length $M = N - \tau_{lag}^{(i)}$. To enable fair comparisons across heterogeneous links, the error is normalized by the state's dynamic range $\Delta x^{(i)} = \max_k(x^{(i)}_k) - \min_k(x^{(i)}_k)$, yielding:
\begin{equation}
\text{NTS-RMSE}^{(i)} (\%) = \frac{100}{\Delta x^{(i)}} \sqrt{ \frac{1}{M} \sum_{k=1}^{M} \left( x^{(i)}_k - \hat{x}^{(i)}_{k+\tau_{lag}^{(i)}} \right)^2 }.
\end{equation}


\begin{figure}[tbp]
    \centering
        \includegraphics[width=0.5\textwidth]{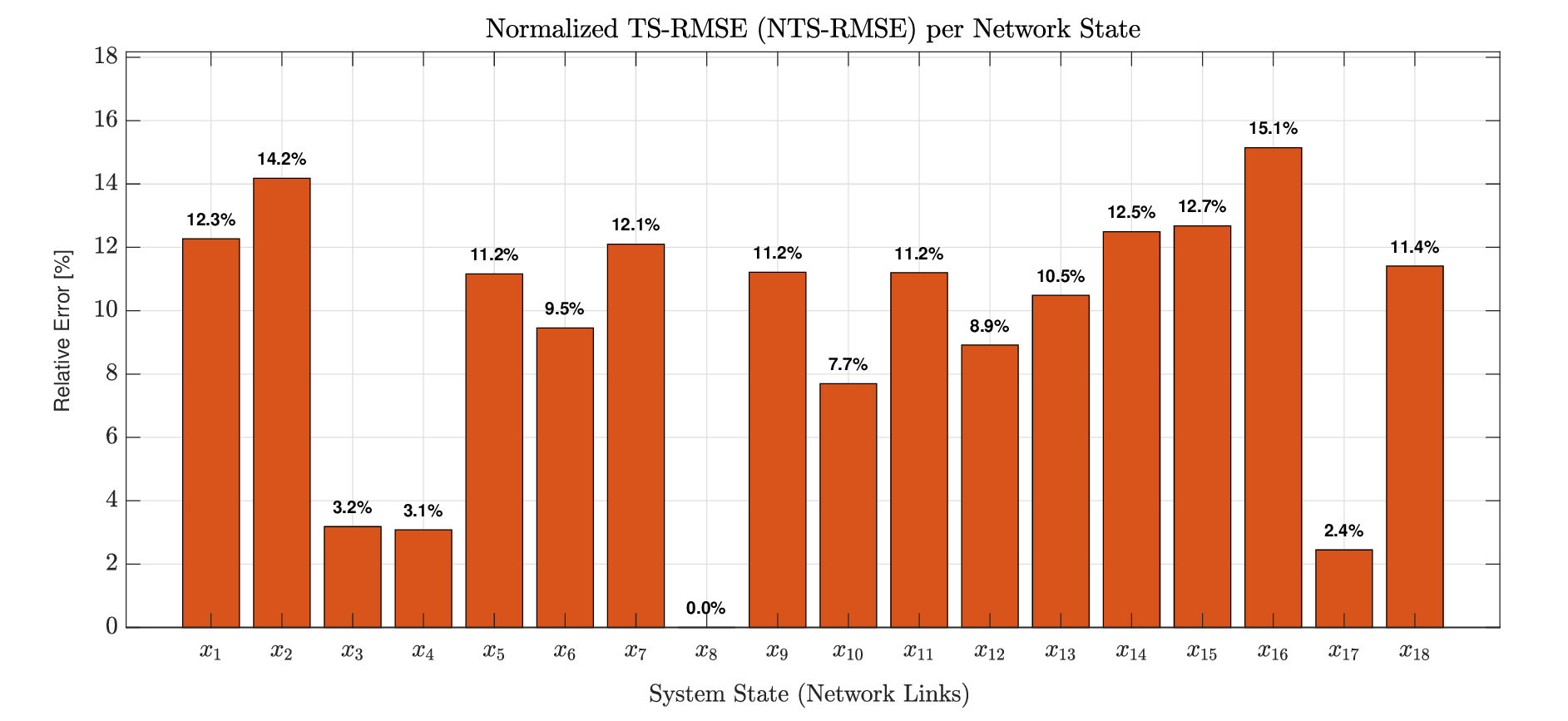}
    \caption{Tracking performance in terms of NTS-RMSE.}
    \label{fig:nts_rmse_bar}
\end{figure}

The NTS-RMSE, evaluated independently for each of  the 18 road links, is reported in Fig. \ref{fig:nts_rmse_bar}. 
The results indicate satisfactory tracking performance despite the significant reduction in active sensors induced by the $L_{2,1}$ penalty. The relative error remains bounded across all spatial links, yielding a network-wide average estimation error of approximately $9.4\%$. While links subject to higher congestion or pronounced transients exhibit increased errors (of up to $15.1\%$ on $x_{16}$ and up to $14.2\%$ on $x_2$), other states are reconstructed with high accuracy (e.g., $2.4\%$ on $x_{17}$ and $3.1\%$ on $x_4$). 
The network-wide average NTS-RMSE of $9.4\%$ indicates that accurate tracking is maintained despite the reduction in active sensors.

Finally, the measured execution times indicate that the proposed dual-rate architecture is computationally compatible with the considered simulation setup. Specifically, the Fast Loop runs in $\approx 0.017\text{ [ms]}$ ($1.74 \times 10^{-5}\text{ [s]}$) per step, introducing negligible computational overhead relative to the traffic sampling interval, while the Slow Loop requires an average of $1.72\text{ [s]}$ per execution to perform data-window extraction, TSVD, Procrustes alignment, and convex optimization.
Because the Slow Loop runs asynchronously as a background supervisory process with an execution time well below the system dwell time ($T_{\rm sched}$), it periodically updates system matrices and sensor schedules without disrupting real-time fast-loop estimation.


\section{Conclusion and Future Work}
\label{sec:conclusion}

This paper presented a data-driven framework for joint state estimation and dynamic sensor scheduling in nonlinear systems. By combining recursive 4SID, Procrustes alignment, and sparse observer synthesis, the proposed approach adapts both the local system representation and the active sensor set to changing operating conditions. Simulation results on an $18$-link Aimsun traffic network showed that accurate state reconstruction can be maintained while substantially reducing the number of active sensors. The dual-rate implementation also demonstrated that the proposed adaptation mechanism can operate without disrupting real-time estimation.
Future work includes extending to distributed multi-agent setups, incorporating event-triggered communication for delays/dropouts, and exploring direct behavioral approaches via Willems' Fundamental Lemma.

\end{document}